\documentclass[10pt,twocolumn]{IEEEtran}

\usepackage{amsmath,amssymb,amsfonts}
\usepackage{bm}
\usepackage{cite}
\usepackage{graphicx}
\usepackage{algorithm}
\usepackage{algorithmic}
\usepackage{url}
\usepackage{amsthm}
\usepackage{hyperref}
\usepackage[caption=false,font=footnotesize]{subfig}

\newtheorem{proposition}{Proposition}
\newtheorem{corollary}{Corollary}

\title{\huge Continuous Fluid Antenna Sampling for Channel Estimation in Cell-Free Massive MIMO}

	\author{Masoud~Kaveh,~\IEEEmembership{Member},~\textit{IEEE},~Farshad~Rostami~Ghadi,~\IEEEmembership{Member},~\textit{IEEE}, Francisco Hernando-Gallego, ~Diego~Mart\'in, Riku Jäntti,~\IEEEmembership{Senior Member},~\textit{IEEE}, and
		Kai-Kit~Wong,~\IEEEmembership{Fellow},~\textit{IEEE}  }

\makeatletter
\def\blfootnote{\xdef\@thefnmark{}\@footnotetext}
\makeatother
\begin{document}
	\maketitle
\blfootnote{The work of M. Kaveh and R. Jäntti has received funding from the SNS JU under the EU’s Horizon Europe Research and Innovation Programme under Grant Agreement No. 101192113 (Ambient-6G). The work of F. Rostami Ghadi is supported by the European Union's Horizon 2022 Research and Innovation Programme under Marie Skłodowska-Curie Grant No. 101107993. The work of  K. K. Wong is supported by the Engineering and Physical Sciences Research Council (EPSRC) under Grant EP/W026813/1.}

\blfootnote{\noindent M. Kaveh and R. Jäntti are with the Department of Information and Communication Engineering, Aalto University, Espoo, Finland. (e-mail: $\rm masoud.kaveh@aalto.fi, riku.jantti@aalto.fi$).}
	\blfootnote{\noindent F. Rostami Ghadi is with the Department of Signal Theory, Networking and Communications, 
    University of Granada, 18071, Granada, Spain. (e-mail: $\rm f.rostami@ugr.es$).}
\blfootnote{\noindent F. H. Gallego and D. Mart\'in are with the Department of Applied Mathematics and Computer Science, respectively, 
Escuela de Ingenier\'ia Inform\'atica de Segovia, 
Universidad de Valladolid, Segovia, Spain (e-mail:$\rm \{fhernando, diego.martin.andres\}@uva.es$)}
\blfootnote{\noindent K. K. Wong is affiliated with the Department of Electronic and Electrical Engineering, University College London, Torrington Place, WC1E 7JE, United Kingdom and also affiliated with Yonsei Frontier Lab, Yonsei University, Seoul, Republic of Korea (e-mail: $\rm kai-kit.wong@ucl.ac.uk$).}
	\begin{abstract}
		In this letter, we develop a continuous fluid antenna (FA) framework for uplink channel estimation in cell-free massive multiple-input and multiple-output (CF-mMIMO) systems. By modeling the wireless channel as a spatially correlated Gaussian random field, channel estimation is formulated as a Gaussian process (GP) regression problem with motion-constrained spatial sampling. Closed-form expressions for the linear minimum mean squared error (LMMSE) estimator and the corresponding estimation error are derived. A fundamental comparison with discrete port-based architectures is established under identical position constraints, showing that continuous FA sampling achieves equal or lower estimation error for any finite pilot budget, with strict improvement for non-degenerate spatial correlation models. Numerical results validate the analysis and show the performance gains of continuous FA sampling over discrete baselines.
	\end{abstract}

	\begin{IEEEkeywords}
		Cell-free massive MIMO, continuous fluid antenna, channel estimation, Gaussian process regression.
	\end{IEEEkeywords}
	\vspace{-5mm}
	\section{Introduction}
	
	\IEEEPARstart{C}{ell}-free massive multiple-input multiple-output (CF-mMIMO) systems employ a large number of geographically distributed access points (APs) that jointly serve users without cell boundaries\cite{el2022cell}. By coherently combining signals from multiple APs, CF-mMIMO provides significant improvements in spectral efficiency, macro-diversity, and user fairness compared to conventional cellular systems \cite{ngo2017cell}. These benefits, however, rely critically on the availability of accurate channel state information (CSI), making uplink channel estimation a central challenge in CF-mMIMO deployments \cite{ammar2022user}.
	
	Fluid antennas systems (FAS) have emerged as a promising technology to exploit spatial channel variations without increasing hardware complexity\cite{wong2021fluid,Ghadi2024:FAS_BC}. By allowing the antenna radiation point to change the position within a confined region, FAS enable the wireless channel to be sampled across space using a single radio frequency (RF) chain. Since wireless channels exhibit significant spatial variations over distances on the order of the carrier wavelength, even small antenna displacements can result in different channel realizations \cite{new2025tut}.
	
	Most existing FAS-assisted studies discretize the fluid motion into a finite set of antenna ports and optimize port selection during data transmission or uplink training \cite{xu2024channel,new2025channel}. While such discrete port-based architectures provide performance gains over fixed-position antennas (FPA), they suffer from two fundamental limitations. First, they approximate the inherently continuous motion of the fluid antenna using a finite set of candidate positions, leading to information loss. Second, they require frequent port switching, which may introduce latency, energy consumption, and hardware constraints, particularly when switching occurs at symbol-level time scales.
	
	Recent works have proposed continuous fluid antenna (FA) models to analyze fading statistics, outage probability, and level-crossing behavior in single-link scenarios \cite{psom2023cont,smith2025dim}. However, a unified and rigorous treatment of continuous FA motion in CF-mMIMO systems, particularly from the perspective of uplink channel estimation, remains largely unexplored. In particular, it is unclear to what extent discrete port selection fundamentally limits channel estimation accuracy compared to continuous FA sampling under realistic position constraints.

    \begin{figure}[t]
    \centering
    \includegraphics[width=0.849\linewidth]{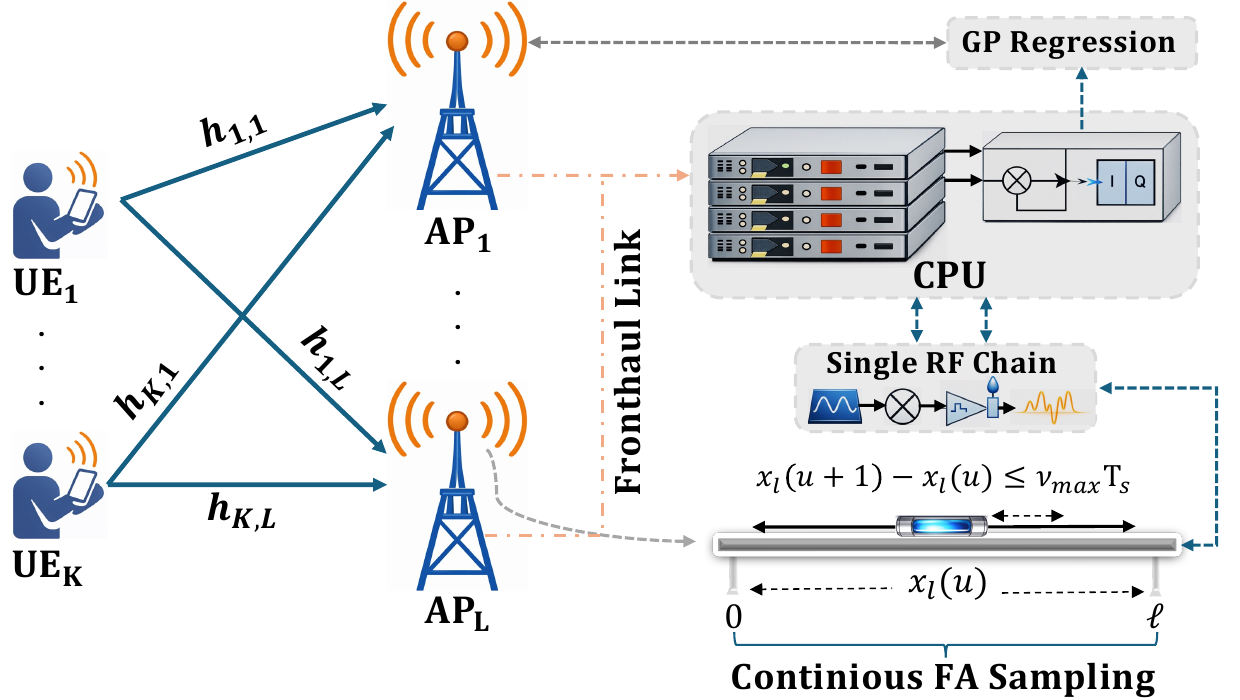}
    \vspace{-8 pt}
    \caption{Uplink CF-mMIMO with continuous fluid antenna sampling.}
    \label{fig:sysmodel}
        \vspace{-16 pt}
\end{figure}

	Addressing the limitations of discrete port-based FAS, this paper develops a continuous FA framework for uplink channel estimation in CF-mMIMO systems, as shown in Fig. \ref{fig:sysmodel}. Each AP is equipped with a single FA whose radiating element changes the position continuously during the training phase, enabling spatial channel sampling without discretization. By modeling the wireless channel as a spatially correlated complex Gaussian random field, uplink channel estimation is formulated as a Gaussian process (GP) regression problem with motion-constrained sampling. We derive closed-form expressions for the linear minimum mean-squared error (LMMSE) estimator and the corresponding estimation error, explicitly revealing the role of antenna motion and spatial correlation. A fundamental performance comparison with discrete port-based architectures is then established. It is shown that, for any finite pilot budget, continuous FA sampling achieves an estimation error no larger than that of any discrete scheme with a finite number of ports, with strict improvement for non-degenerate spatial correlation models. Numerical results validate the analysis and illustrate the resulting gains under practical system parameters.

\vspace{-3mm}
	\section{System Model}\label{sec:system}
	\subsection{Network Model}
	As shown in Fig. \ref{fig:sysmodel}, we consider the uplink of a CF-mMIMO system composed of $L$ geographically distributed APs jointly serving $K$ single-antenna user equipments (UEs) on the same time-frequency resource. We assume the APs are connected to a central processing unit (CPU) through fronthaul links and cooperate in channel estimation and data detection. Also, time-division duplexing (TDD) operation is assumed, such that uplink and downlink channels are reciprocal. Each AP is equipped with a single FA, whose radiating element can continuously change the position along a one-dimensional segment of finite length $\ell$, normalized by the carrier wavelength. The FA is connected to a single RF chain, so that at any given time instant only one antenna position is active at each AP. The FA motion is controlled locally at each AP and is assumed to be synchronized with the uplink pilot transmission phase.
	
	Following the user-centric CF-mMIMO paradigm \cite{ammar2022user}, each user $k \in \{1,\dots,K\}$ is served only by a subset of APs denoted by $\mathcal{L}_k \subseteq \{1,\dots,L\}$, which is determined based on large-scale fading coefficients. Conversely, each AP $l$ serves a subset of users denoted by $\mathcal{K}_l$. This association is assumed to remain fixed over the considered coherence block. Each UE transmits pilot symbols for channel estimation followed by uplink data symbols. Additionally, perfect synchronization among UEs and APs is assumed, and all signals are narrowband such that frequency-flat fading applies.\vspace{-5mm}
	\subsection{Continuous FA Channel Model}
	
	Let $x_l \in [0,\ell]$ denote the instantaneous position of the fluid antenna at AP $l$, measured along the antenna’s movement axis and normalized by the carrier wavelength. The uplink channel coefficient between UE $k$ and AP $l$ when the FA is located at position $x_l$ is denoted by $h_{k,l}(x_l) \in \mathbb{C}$.
	
	We model the small-scale fading component of the uplink channel as a spatially continuous complex Gaussian random field indexed by the antenna position, in contrast to discrete parametric sparse-array models. Specifically, for each UE-AP pair $(k,l)$, the channel $h_{k,l}(x)$ is modeled as a zero-mean, circularly symmetric, spatially correlated complex Gaussian random field\footnote{The channel process $h_{k,l}(x)$ is assumed to be a proper complex Gaussian random field, i.e.,
		$\mathbb{E}[h_{k,l}(x) h_{k,l}(x')] = 0$ for all $x,x'$. This assumption ensures circular symmetry and guarantees that standard LMMSE and GP regression results apply without modification.} satisfying
	$
		\mathbb{E}\!\left[h_{k,l}(x)\right] = 0,
	$
	and
	\begin{equation}
		\mathbb{E}\!\left[h_{k,l}(x) h_{k,l}^*(x')\right]
		= \beta_{k,l} \, \kappa(x,x'),
		\label{eq:spatial_covariance}
	\end{equation}
	where $\beta_{k,l} > 0$ is the large-scale fading coefficient accounting for path-loss and shadowing between UE $k$ and AP $l$, and $\kappa(x,x')$ is a normalized spatial correlation kernel that depends only on the distance between antenna positions. The large-scale fading coefficients $\{\beta_{k,l}\}$ are assumed to be constant over many coherence blocks and known at the CPU and APs, as commonly assumed in CF-mMIMO systems \cite{ngo2017cell}. In contrast, the small-scale fading realizations vary from block to block.
	
	Throughout this work, unless otherwise stated, we adopt a Jakes's model to characterize the spatial correlation \cite{new2025tut}, i.e., 
	\begin{align}
		\kappa(x,x') = J_0\!\left(2\pi |x - x'|\right),
		\label{eq:jakes_kernel}
	\end{align}
	where $J_0(\cdot)$ is the zero-order Bessel function of the first kind. \vspace{-3mm} 
	
	
	
	
	\vspace{-3mm}
	\section{Uplink Training and Continuous FA Sampling}\label{sec:uplink}
	
	\subsection{Pilot Transmission Model}
	
	We consider an uplink training phase of $\tau_p$ channel uses within each coherence block. Each UE $k \in \{1,\dots,K\}$ transmits a deterministic pilot sequence
	$\boldsymbol{\phi}_k = [\phi_k(1), \dots, \phi_k(\tau_p)]^T \in \mathbb{C}^{\tau_p}$
	with transmit power $\eta_{p,k}$. The pilot sequences are assumed to be mutually orthogonal, i.e.,
	\begin{align}
		\boldsymbol{\phi}_k^H \boldsymbol{\phi}_{k'} =
		\begin{cases}
			\tau_p, & k = k', \\
			0, & k \neq k',
		\end{cases}
	\end{align}
	which requires $\tau_p \ge K$.
	During the training phase, the FA at AP $l$ continuously change the position along its admissible spatial domain. Let $x_l(u) \in [0,\ell]$ denote the FA position at AP $l$ during the $u$-th pilot symbol, where $u \in \{1,\dots,\tau_p\}$. The FA position is allowed to vary from symbol to symbol, enabling spatial sampling of the channel during pilot transmission.
	The received baseband signal at AP $l$ during pilot symbol $u$ is 
	\begin{align}
		y_l(u) =
		\sum_{k=1}^K
		\sqrt{\eta_{p,k}}\,
		h_{k,l}\!\left(x_l(u)\right)
		\phi_k(u)
		+ n_l(u),
		\label{eq:uplink_pilot_rx}
	\end{align}
	where $h_{k,l}(x_l(u))$ denotes the uplink channel coefficient between UE $k$ and AP $l$ evaluated at FA position $x_l(u)$, and $n_l(u) \sim \mathcal{CN}(0,\sigma^2)$ is additive white Gaussian noise.
	
	By stacking the received samples over the training interval, the received pilot signal vector at AP $l$ can be written as
	\begin{align}
		\mathbf{y}_l =
		\sum_{k=1}^K
		\sqrt{\eta_{p,k}}
		\operatorname{diag}\!\big(
		\boldsymbol{\phi}_k
		\big)
		\mathbf{h}_{k,l}
		+ \mathbf{n}_l,\, \mathbf{n}_l \sim \mathcal{CN}(\mathbf{0}, \sigma^2 \mathbf{I}),
	\end{align}
	where
	$
	\mathbf{h}_{k,l}
	=
	\big[
	h_{k,l}(x_l(1)),
	\dots,
	h_{k,l}(x_l(\tau_p))
	\big]^T
	$.
	
	To isolate the contribution of UE $k$, AP $l$ applies pilot matched filtering by correlating $\mathbf{y}_l$ with $\boldsymbol{\phi}_k^*$, yielding
	\begin{align}
		\hspace{-3mm}\mathbf{y}_{k,l}
		=
		\frac{1}{\tau_p}
		\boldsymbol{\phi}_k^H \mathbf{y}_l
		=
		\sqrt{\eta_{p,k}}\, \mathbf{h}_{k,l}
		+ \mathbf{n}_{k,l},\, \mathbf{n}_{k,l} \sim \mathcal{CN}(\mathbf{0}, \frac{\sigma^2}{\tau_p} \mathbf{I}),
		\label{eq:pilot_matched_obs}
	\end{align}
	
	\subsection{Continuous FA Sampling Interpretation}
	
	Equation \eqref{eq:pilot_matched_obs} reveals that uplink training with a fluid antenna corresponds to observing multiple samples of a spatially correlated random field at locations $\{x_l(u)\}_{u=1}^{\tau_p}$. Unlike discrete port-based FA architectures, where the antenna position is restricted to a finite set of locations, the continuous FA enables sampling at arbitrary spatial positions within $[0,\ell]$, subject only to physical position constraints. As a result, uplink training with a continuous FA can be interpreted as a spatial sampling problem of a Gaussian random field, which naturally leads to a GP regression formulation for channel estimation.
	\subsection{position constraints}
	The FA motion is constrained by physical limitations of the fluid or actuation mechanism. In particular, we impose a finite speed constraint on the FA trajectory as
	\begin{align}
		|x_l(u+1) - x_l(u)| \le v_{\max} T_s,
		\label{eq:motion_constraint}
	\end{align}
	where $T_s$ denotes the symbol duration and $v_{\max}$ is the maximum allowable velocity of the fluid antenna.
	
	Constraint \eqref{eq:motion_constraint} couples the FA positions across pilot symbols and restricts the set of admissible sampling trajectories. This constraint distinguishes continuous FA sampling from idealized models that assume instantaneous repositioning and plays a key role in determining the achievable channel estimation performance.\vspace{-0.3cm}
	
	
	
	\section{LMMSE Channel Estimation via GP Regression}
	
	In this section, we derive the LMMSE estimator for the uplink channel coefficients under continuous FA sampling. We show that the resulting estimator admits a natural interpretation as GP regression with spatially correlated observations.\vspace{-5mm}
	\subsection{Second-Order Statistics and Covariance Structure}
	Recall from Section~\ref{sec:system} that, for each UE-AP pair $(k,l)$, the channel $h_{k,l}(x)$ is modeled as a zero-mean, circularly symmetric complex Gaussian random field with spatial covariance kernel $\kappa(x,x')$. Let
	$
	\mathbf{h}_{k,l}
	=
	\big[
	h_{k,l}(x_l(1)),\,
	\dots,\,
	h_{k,l}(x_l(\tau_p))
	\big]^T
	\in \mathbb{C}^{\tau_p}
$
	denote the vector of channel samples observed at AP $l$ during the uplink training phase. Therefore, the covariance matrix of $\mathbf{h}_{k,l}$ is given by
	\begin{align}
		\mathbf{R}_{k,l}
		=
		\mathbb{E}\!\left[
		\mathbf{h}_{k,l} \mathbf{h}_{k,l}^H
		\right],
	\end{align}
	whose $(u,v)$-th entry is
	\begin{equation}
		[\mathbf{R}_{k,l}]_{u,v}
		=
		\mathbb{E}\!\left[
		h_{k,l}(x_l(u)) h_{k,l}^*(x_l(v))
		\right]
		=
		\beta_{k,l} \kappa\!\left(x_l(u), x_l(v)\right).
		\label{eq:Rkl_entries}
	\end{equation}
	
	Consider now the channel coefficient $h_{k,l}(x)$ evaluated at an arbitrary antenna position $x \in [0,\ell]$, which may or may not coincide with any of the sampled locations $\{x_l(u)\}_{u=1}^{\tau_p}$. The cross-covariance vector between $h_{k,l}(x)$ and $\mathbf{h}_{k,l}$ is defined as
	\begin{equation}
		\mathbf{r}_{k,l}(x)
		=
		\mathbb{E}\!\left[
		\mathbf{h}_{k,l} h_{k,l}^*(x)
		\right]
		\in \mathbb{C}^{\tau_p},
	\end{equation}
	with entries
	\begin{equation}
		[\mathbf{r}_{k,l}(x)]_u
		=
		\beta_{k,l} \kappa\!\left(x, x_l(u)\right),
		\quad u = 1,\dots,\tau_p.
		\label{eq:rkl_entries}
	\end{equation}
	
	Equations \eqref{eq:Rkl_entries} and \eqref{eq:rkl_entries} completely characterize the second-order statistics of the channel samples under continuous FA motion.\vspace{-6mm}
	
	\subsection{LMMSE Channel Estimator}
	
	From Section~\ref{sec:uplink}, the matched pilot observation for UE $k$ at AP $l$ is given by
	\begin{equation}
		\mathbf{y}_{k,l}
		=
		\sqrt{\eta_{p,k}}\, \mathbf{h}_{k,l}
		+
		\mathbf{n}_{k,l},
		\label{eq:yk_l_recall}
	\end{equation}
	where $\mathbf{n}_{k,l} \sim \mathcal{CN}(\mathbf{0}, \frac{\sigma^2}{\tau_p}\mathbf{I})$ and is independent of $\mathbf{h}_{k,l}$. Since $h_{k,l}(x)$ and $\mathbf{y}_{k,l}$ are jointly complex Gaussian, the LMMSE estimator of $h_{k,l}(x)$ given $\mathbf{y}_{k,l}$ coincides with the conditional mean
$
	\hat{h}_{k,l}(x)
	=
	\mathbb{E}\!\left[
	h_{k,l}(x) \,\big|\, \mathbf{y}_{k,l}
	\right]
$. Thus, using standard LMMSE estimation results, the estimator can be written as
	\begin{align}
		\hat{h}_{k,l}(x)
		=
		\mathbf{r}_{k,l}^H(x)
		\left(
		\mathbf{R}_{k,l}
		+
		\frac{\sigma^2}{\eta_{p,k}\tau_p} \mathbf{I}
		\right)^{-1}
		\mathbf{y}_{k,l}.
		\label{eq:lmmse_estimator}
	\end{align}
	
	Equation \eqref{eq:lmmse_estimator} shows that the channel estimate at position $x$ is obtained as a linear combination of the noisy channel samples collected along the FA trajectory. The weighting coefficients depend explicitly on the spatial correlation kernel and the sampling locations $\{x_l(u)\}$.\vspace{-3mm}
	
	\subsection{GP Interpretation}
	
	The estimator in \eqref{eq:lmmse_estimator} admits a natural interpretation as GP regression. Specifically, the channel $h_{k,l}(x)$ is modeled as a realization of a proper complex GP with covariance function $\beta_{k,l}\kappa(x,x')$, while the observations $\mathbf{y}_{k,l}$ correspond to noisy samples of this GP at locations $\{x_l(u)\}$. Under this interpretation, \eqref{eq:lmmse_estimator} corresponds to the posterior mean of the GP conditioned on the observed samples, and the estimation error corresponds to the posterior variance.\vspace{-3mm}
	
	
	\subsection{Estimation Error and NMSE}
	
	The mean squared error (MSE) of the LMMSE estimate at position $x$ is defined as
	\begin{align}
	\mathrm{MSE}_{k,l}(x)
	=
	\mathbb{E}\!\left[
	\big| h_{k,l}(x) - \hat{h}_{k,l}(x) \big|^2
	\right].
	\end{align}
	Using standard LMMSE theory, the MSE admits the closed-form expression
\vspace{-4mm}	\begin{align}
		\mathrm{MSE}_{k,l}(x)
		=
		\beta_{k,l}
		-
		\mathbf{r}_{k,l}^H(x)
		\left(
		\mathbf{R}_{k,l}
		+
		\frac{\sigma^2}{\eta_{p,k}\tau_p} \mathbf{I}
		\right)^{-1}
		\mathbf{r}_{k,l}(x).
		\label{eq:mse_expression}
	\end{align}
	To facilitate comparisons across different channel conditions, we define the normalized mean squared error (NMSE) as
	\begin{align}\notag
	&	\mathrm{NMSE}_{k,l}(x)
		=
		\frac{\mathrm{MSE}_{k,l}(x)}{\beta_{k,l}}
		\\
		&=
		1
		-
		\frac{
			\mathbf{r}_{k,l}^H(x)
			\left(
			\mathbf{R}_{k,l}
			+
			\frac{\sigma^2}{\eta_{p,k}\tau_p} \mathbf{I}
			\right)^{-1}
			\mathbf{r}_{k,l}(x)
		}{\beta_{k,l}}.
		\label{eq:nmse_expression}
	\end{align}
	
	
The proposed estimator provides the channel estimate at any desired antenna position within the admissible spatial domain. In practice, the channel is evaluated at the antenna position used for data transmission, typically the final position at the end of the training phase.

	\begin{figure*}[!t]
		\centering
		\subfloat[]{%
			\includegraphics[width=0.26\linewidth]{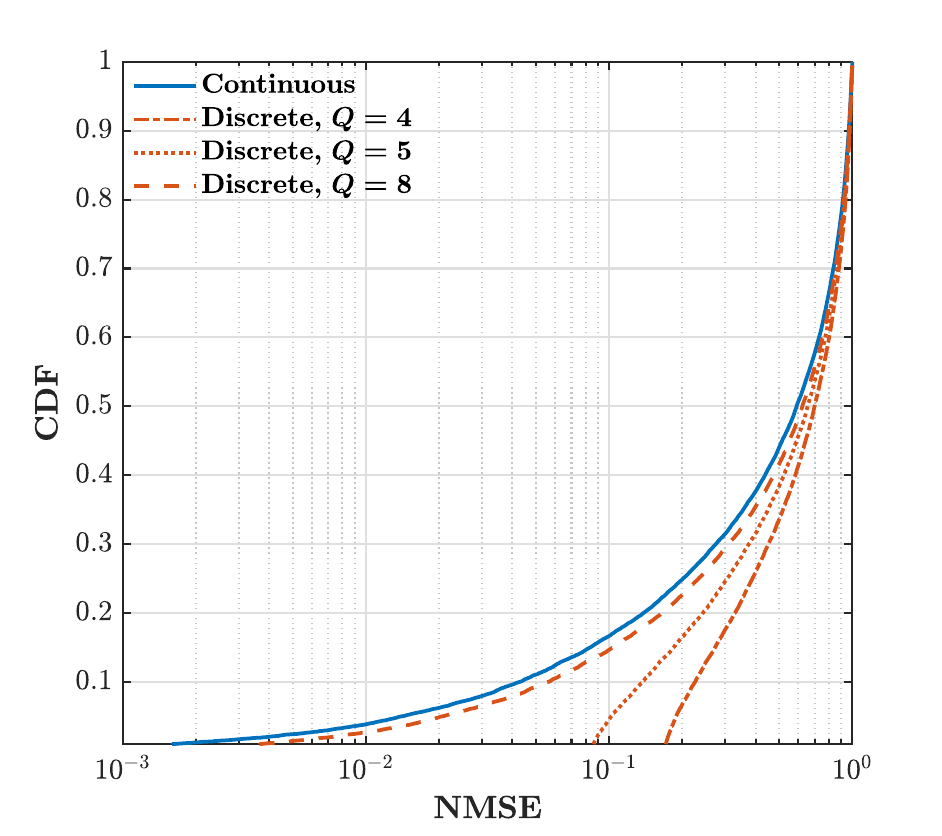}
			\label{fig:c-n}
		}\hfil\hspace{-5mm}
		\subfloat[]{%
			\includegraphics[width=0.26\linewidth]{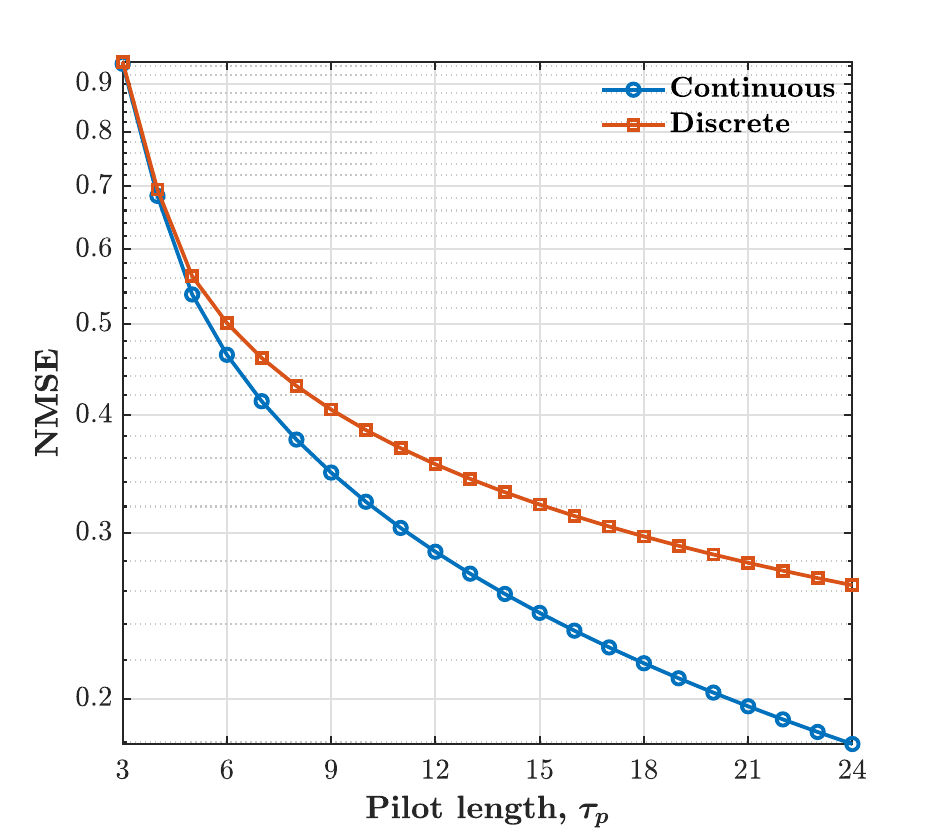}
			\label{fig:n-pi}
		}\hfil\hspace{-5mm}
		\subfloat[]{%
			\includegraphics[width=0.26\linewidth]{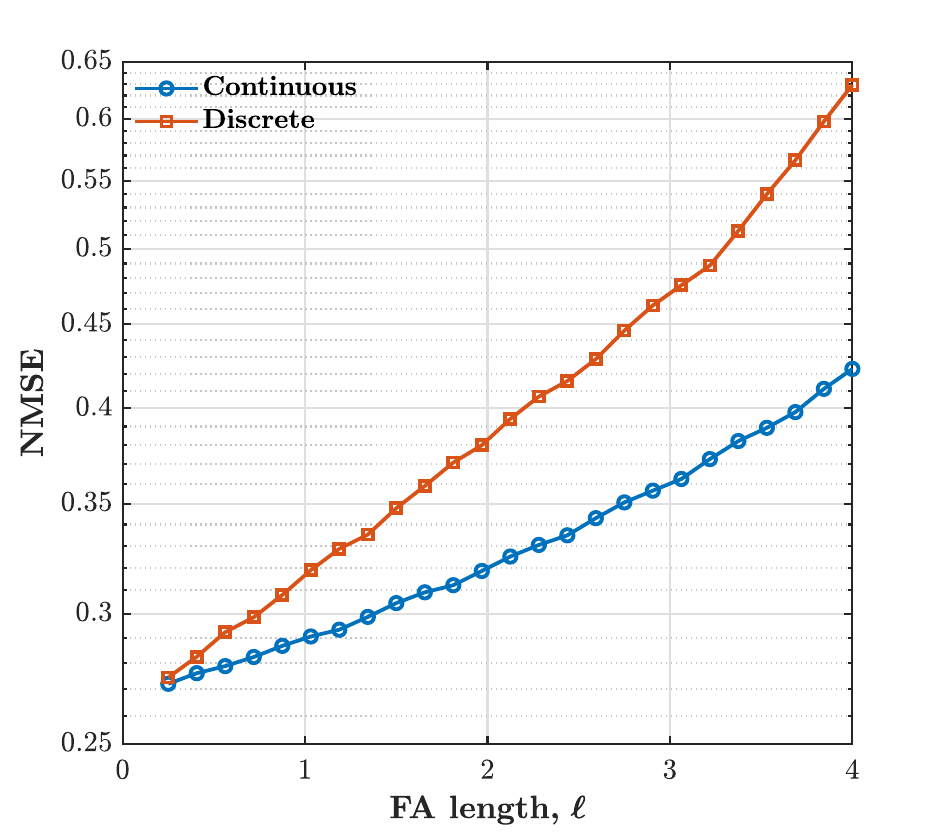}
			\label{fig:n-l}
		}\hfil\hspace{-5mm}
		\subfloat[]{%
			\includegraphics[width=0.26\linewidth]{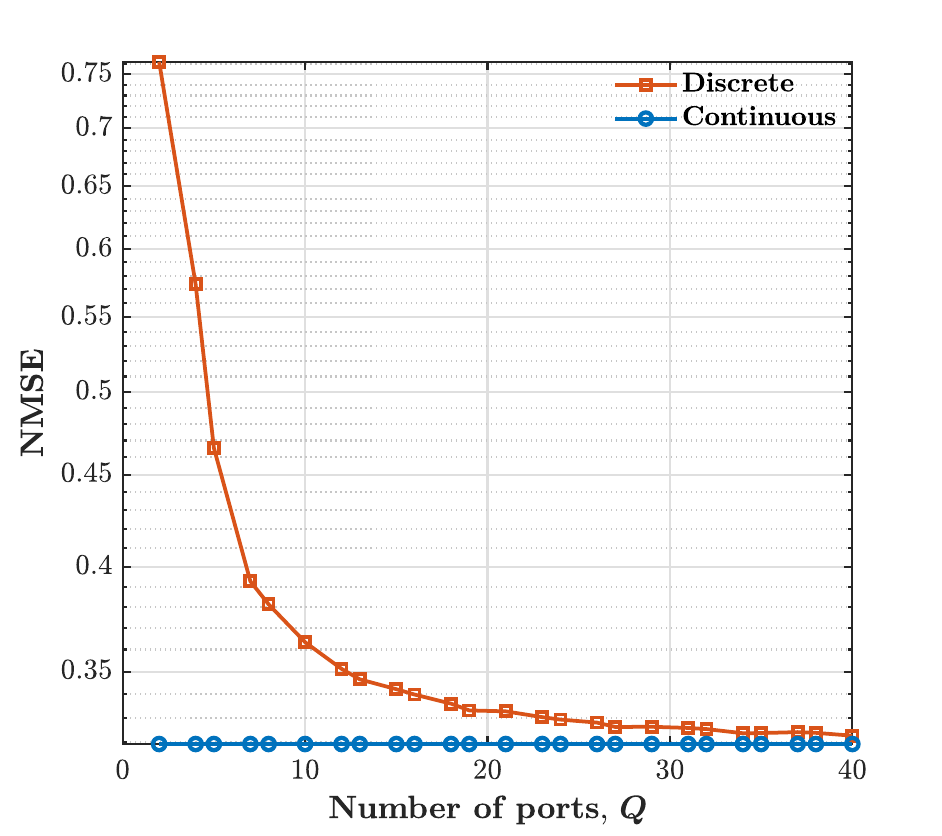}
			\label{fig:n-p}
		}\hfil\hspace{-5mm}
		\caption{(a) CDF versus NMSE; (b) NMSE versus pilot length $\tau_p$; (c) NMSE versus FA length $\ell$; and (d) NMSE versus number of ports $Q$.}
		\label{fig:main}\vspace{0mm}
	\end{figure*}
	\section{Comparison with Discrete Port-Based FA}
	Here, we compare the proposed continuous FA sampling framework with conventional discrete port-based FA architectures commonly considered in the literature.
	\subsection{Discrete Port-Based FA Model}
	
	In a discrete FA architecture, the antenna position at each AP is restricted to a finite set of $Q$ predefined port locations
	$
	\mathcal{X}_Q = \{x_1, x_2, \dots, x_Q\} \subset [0,\ell].
	$
	During uplink training, the FA can only occupy positions in $\mathcal{X}_Q$, and the sampling locations $\{x_l(u)\}_{u=1}^{\tau_p}$ must satisfy
	\begin{align}
	x_l(u) \in \mathcal{X}_Q, \quad u = 1,\dots,\tau_p.
	\end{align}
	Moreover, to ensure a fair comparison with the continuous FA model in \eqref{eq:motion_constraint}, we impose the same per-symbol position constraint on discrete switching, i.e.,
	\begin{equation}
		|x_l(u+1)-x_l(u)| \le v_{\max}T_s,\quad u=1,\dots,\tau_p-1,
		\label{eq:motion_constraint_discrete}
	\end{equation}
	with $x_l(u)\in\mathcal{X}_Q$. This captures practical limitations of port switching and ensures that the feasible discrete sampling trajectories form a subset of the continuous admissible trajectories. Typical discrete FA schemes select ports either deterministically, e.g., round-robin or adaptively based on heuristic criteria such as instantaneous or average channel quality. Under this restriction, the channel samples collected during training correspond to noisy observations of the underlying spatial random field at a finite set of locations. As a result, channel estimation is effectively performed using a finite-dimensional approximation of the continuous spatial process.
	
	\subsection{Continuous FA Sampling Model}
	
	In contrast, the proposed continuous FA framework allows the antenna position $x_l(u)$ to take any value in the continuous domain $[0,\ell]$, subject only to the physical position constraint in \eqref{eq:motion_constraint}. This enables arbitrary spatial sampling of the channel during pilot transmission and eliminates the need for discrete port switching. From a statistical perspective, continuous FA sampling allows the observation locations to be selected from an uncountable set, thereby enlarging the class of admissible sampling strategies relative to discrete FA schemes.
	
	\subsection{Fundamental Performance Comparison}
	
	
	\begin{proposition}
		\label{pro-1}
		Fix $\tau_p$ and assume that both continuous and discrete FA sampling satisfy the same position constraint, i.e., \eqref{eq:motion_constraint} and \eqref{eq:motion_constraint_discrete}, respectively. Then, the minimum achievable NMSE under continuous FA sampling is no larger than that of any discrete port-based scheme with a finite number of ports $Q$. Moreover, for any non-degenerate spatial correlation kernel, the inequality is strict unless the discrete port set is sufficiently dense (as $Q\to\infty$).
	\end{proposition}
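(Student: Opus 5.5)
The plan is to treat the minimum NMSE as an optimization of the closed-form objective \eqref{eq:nmse_expression} over an admissible set of trajectories. For a target position $x$, define
\begin{equation*}
F(\mathbf{x}) = 1 - \frac{1}{\beta_{k,l}}\,\mathbf{r}_{k,l}^H(x)\Big(\mathbf{R}_{k,l}+\tfrac{\sigma^2}{\eta_{p,k}\tau_p}\mathbf{I}\Big)^{-1}\mathbf{r}_{k,l}(x),
\end{equation*}
where $\mathbf{x}=(x_l(1),\dots,x_l(\tau_p))$. Let $\mathcal{A}_c\subset[0,\ell]^{\tau_p}$ be the set of trajectories satisfying \eqref{eq:motion_constraint}, and let $\mathcal{A}_Q=\mathcal{A}_c\cap\mathcal{X}_Q^{\tau_p}$ be the discrete trajectories satisfying \eqref{eq:motion_constraint_discrete}.

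\textbf{Weak inequality.} Since $\mathcal{A}_Q\subseteq\mathcal{A}_c$ by construction, we have $\inf_{\mathcal{A}_c}F\le\min_{\mathcal{A}_Q}F$ for every finite $Q$. We also need the continuous minimum to actually be attained. $\mathcal{A}_c$ is closed and bounded, hence compact. The matrix inside the inverse is positive definite, with eigenvalues at least $\sigma^2/(\eta_{p,k}\tau_p)>0$, so the inverse exists everywhere. Because $J_0$ is continuous, $F$ is continuous on $\mathcal{A}_c$ and therefore attains its minimum $F^\star$ at some $\mathbf{x}^\star$. This also covers any heuristic discrete scheme (round-robin or channel-quality based), because such a scheme picks some element of $\mathcal{A}_Q$ and its NMSE is at least $\min_{\mathcal{A}_Q}F$.

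\textbf{Convergence as $Q\to\infty$.} If the port sets $\mathcal{X}_Q$ become dense in $[0,\ell]$, I would project $\mathbf{x}^\star$ onto a nearby feasible discrete trajectory. $F$ is uniformly continuous on a compact set, so this gives $\min_{\mathcal{A}_Q}F\to F^\star$. The obstacle is that rounding each coordinate to the nearest port may break the speed constraint. I would handle this by shrinking $\mathbf{x}^\star$ slightly toward a constant trajectory, using the convex combination $(1-\epsilon)\mathbf{x}^\star+\epsilon c\mathbf{1}$. This leaves slack in \eqref{eq:motion_constraint}, and the slack absorbs the rounding error once the port spacing is below $\epsilon v_{\max}T_s/2$.

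\textbf{Strictness for finite $Q$.} This is the main difficulty. The claim needs to say that, for a non-degenerate kernel, $\mathcal{A}_Q$ generically misses the minimizer set of $F$. I would define \emph{non-degenerate} as: $F$ is real-analytic and non-constant on $\mathcal{A}_c$, which holds for the Jakes kernel since $J_0$ is entire. The argument then runs in three steps.
\begin{itemize}
\item At an interior optimum, $\nabla F(\mathbf{x}^\star)=0$. If $\mathbf{x}^\star$ sits on the boundary, the KKT conditions pin it down instead.
\item The set of minimizers of a non-constant analytic function is a proper analytic subvariety, of measure zero and with empty interior.
\item A finite grid $\mathcal{X}_Q$ whose points are chosen independently of the kernel hits this subvariety only in non-generic configurations. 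So $\min_{\mathcal{A}_Q}F>F^\star$ except for port sets that are specially aligned or become dense as $Q\to\infty$.
\end{itemize}
The hardest part is making ``non-degenerate'' precise enough to rule out the exceptional case of a port set that happens to contain the optimal trajectory. I expect to state this as a genericity condition on the port locations. Alternatively, I would show directly that a small feasible perturbation along $-\nabla F$ away from any grid trajectory strictly lowers the NMSE. This works whenever the gradient is nonzero at every discrete optimum, and that condition is exactly what non-degeneracy should be taken to guarantee.
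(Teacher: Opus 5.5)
Your weak inequality is the same superset argument the paper gives. Your additions are correct and go beyond the paper: attainment of the minimum via compactness of $\mathcal{A}_c$ and continuity of $F$, and the shrink-then-round argument for $\min_{\mathcal{A}_Q}F\to F^\star$ under dense ports. The strictness part, however, has a gap that cannot be closed, because the strict claim is false as stated.

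The idea you are missing is that the single-target NMSE has an explicit global minimizer. Put $s=\sigma^2/(\eta_{p,k}\tau_p)$. Let $\mathbf{k}$ have entries $\kappa(x,x_l(u))$ and $\mathbf{K}$ have entries $\kappa(x_l(u),x_l(v))$, so that $\mathrm{NMSE}_{k,l}(x)=1-\beta_{k,l}\mathbf{k}^T(\beta_{k,l}\mathbf{K}+s\mathbf{I})^{-1}\mathbf{k}$. The Schur complement $\mathbf{E}=\mathbf{K}-\mathbf{k}\mathbf{k}^T$ is positive semidefinite, so $\mathbf{C}=\beta_{k,l}\mathbf{E}+s\mathbf{I}\succeq s\mathbf{I}$. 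Sherman--Morrison then gives
\[
\mathrm{NMSE}_{k,l}(x)=\frac{1}{1+\beta_{k,l}\,\mathbf{k}^T\mathbf{C}^{-1}\mathbf{k}}\;\ge\;\frac{1}{1+\beta_{k,l}\|\mathbf{k}\|^2/s}\;\ge\;\frac{s}{s+\beta_{k,l}\tau_p},
\]
using $|J_0(t)|\le1$ with equality only at $t=0$. Equality holds iff $x_l(u)=x$ for every $u$. So the optimal trajectory simply parks at the target, which trivially satisfies \eqref{eq:motion_constraint}.

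Consequently, any port set with $x\in\mathcal{X}_Q$ attains the continuous optimum with finite $Q$. Examples are $Q=1$ with the port at $x$, or a uniform grid containing the simulation target $\ell/2$. If the target is instead the final trajectory point, every port set attains equality. For $x\notin\mathcal{X}_Q$, every element of the finite set $\mathcal{A}_Q$ violates the equality condition, so the inequality is strict. The operative condition is therefore $x\notin\mathcal{X}_Q$, not kernel non-degeneracy or port density. Note that the Jakes kernel makes $F$ analytic and non-constant, which is exactly your definition of non-degenerate, yet equality occurs.

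Your strictness steps would not have reached this even with a genericity hypothesis. That the minimizer set has measure zero in $\mathbb{R}^{\tau_p}$ says nothing about whether the finite set $\mathcal{A}_Q$ meets it. Genericity would have to be argued in the space of port locations, separately for each assignment of ports to pilot symbols. Repeated ports place discrete trajectories on low-dimensional coordinate subspaces such as the diagonal $x_l(1)=\dots=x_l(\tau_p)$, which is precisely where the minimizer lies. The gradient alternative assumes what is to be proved. Moreover, at boundary points of $\mathcal{A}_c$ a nonzero gradient need not give a feasible descent direction.

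The paper's own strictness argument is only the heuristic that a ``strictly smaller attainable observation subspace'' yields strictly larger posterior variance. It fails for the same reason: a minimum over a strict subset can coincide with the minimum over the superset. So do not try to reproduce it. What is provable is: equality holds iff $x\in\mathcal{X}_Q$, and $\min_{\mathcal{A}_Q}F\to F^\star$ as $\operatorname{dist}(x,\mathcal{X}_Q)\to0$. The latter follows by parking at the nearest port and needs no density.
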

	
%
	
	\begin{proof}
		Under \eqref{eq:motion_constraint_discrete}, any discrete sampling trajectory $\{x_l(u)\}_{u=1}^{\tau_p}$ is admissible for the continuous model, since $\mathcal{X}_Q\subset[0,\ell]$ and both satisfy the same position constraint. Hence, the set of feasible discrete trajectories is a subset of the feasible continuous trajectories. Since the NMSE in \eqref{eq:nmse_expression} is the LMMSE error for a Gaussian process and depends on the sampling locations only through the resulting covariance matrices, minimizing over a superset of feasible trajectories cannot yield a larger optimum. Therefore, the minimum NMSE achievable by continuous sampling is no larger than that achievable by any discrete scheme with finite $Q$.
		
		For a non-degenerate kernel, restricting sampling locations to a finite set $\mathcal{X}_Q$ yields a strictly smaller attainable observation subspace than allowing arbitrary locations in $[0,\ell]$, which leads to strictly larger posterior variance unless $\mathcal{X}_Q$ becomes dense as $Q\to\infty$.
	\end{proof}


\begin{corollary}
	Under the same position constraint, discrete port-based schemes approach the continuous-sampling limit only as $Q$ increases and the port set becomes dense in $[0,\ell]$.
\end{corollary}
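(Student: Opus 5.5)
The approach is to treat the NMSE in \eqref{eq:nmse_expression} as a function
\begin{equation*}
F(\mathbf{x}) = 1-\frac{1}{\beta_{k,l}}\mathbf{r}_{k,l}^H(x)\Big(\mathbf{R}_{k,l}+\tfrac{\sigma^2}{\eta_{p,k}\tau_p}\mathbf{I}\Big)^{-1}\mathbf{r}_{k,l}(x)
\end{equation*}
of the trajectory $\mathbf{x}=(x_l(1),\dots,x_l(\tau_p))$ and of the target position $x$. Here $\beta_{k,l}$ cancels, so $F$ depends only on $\kappa$ evaluated at the sampling points. Let $\mathcal{C}\subset[0,\ell]^{\tau_p}$ be the set of trajectories satisfying \eqref{eq:motion_constraint}, and let $\mathcal{D}_Q=\mathcal{C}\cap\mathcal{X}_Q^{\tau_p}$ be the discrete feasible set under \eqref{eq:motion_constraint_discrete}.

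\textbf{Weak inequality.} Since $\mathcal{X}_Q\subset[0,\ell]$ and both models impose the identical step bound, $\mathcal{D}_Q\subseteq\mathcal{C}$. Therefore $\inf_{\mathcal{C}}F\le\min_{\mathcal{D}_Q}F$, which covers any discrete scheme (round-robin or adaptive), since each of these picks some element of $\mathcal{D}_Q$. For the continuous minimum to be attained, $\mathcal{C}$ must be compact and $F$ continuous. $\mathcal{C}$ is a closed subset of a box. $F$ is continuous because $J_0$ is continuous and the matrix $\mathbf{R}_{k,l}+\frac{\sigma^2}{\eta_{p,k}\tau_p}\mathbf{I}$ is uniformly positive definite, so inversion is continuous.

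\textbf{Strictness.} This is the main obstacle, because the statement is vague ("non-degenerate", "unless dense") and strictness can genuinely fail. For example, if $\mathcal{X}_Q$ happens to contain a continuous minimizer, equality holds. I would therefore prove strictness in a precise generic form. Define non-degeneracy as: $\kappa$ is strictly positive definite and real-analytic, which $J_0(2\pi|\cdot|)$ satisfies. The steps are then as follows.
\begin{itemize}
\item[(i)] Fix a discrete minimizer $\mathbf{x}^\star\in\mathcal{D}_Q$.
\item[(ii)] Show that $\mathbf{x}^\star$ is not a stationary point of $F$ relative to $\mathcal{C}$, for all but a nongeneric (measure-zero) family of port configurations. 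To do this, compute $\partial F/\partial x_l(u)=-\frac{2}{\beta}\,\mathrm{Re}\{\partial_u(\mathbf{r}^H\mathbf{A}^{-1}\mathbf{r})\}$ using $\partial\mathbf{A}^{-1}=-\mathbf{A}^{-1}(\partial\mathbf{A})\mathbf{A}^{-1}$. By analyticity, this gradient vanishes identically only on an analytic subvariety. Hence for generic $\mathcal{X}_Q$ some coordinate can be perturbed within the slack of \eqref{eq:motion_constraint} to strictly decrease $F$.
\item[(iii)] Handle the case where the motion constraint is active by perturbing an interior coordinate, or by translating the whole trajectory.
\end{itemize}
Of these, the genericity in step (ii) is the part I am least sure can be made fully rigorous. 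I would state the strict inequality as holding for all port sets outside a closed nowhere-dense set, which is the precise version of the paper's "unless sufficiently dense".

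\textbf{Approach as $Q\to\infty$.} This supports the "dense" clause and the Corollary. If the mesh of $\mathcal{X}_Q$ tends to $0$, round each coordinate of a continuous optimizer $\mathbf{x}^\circ$ to a nearest port. This moves each point by at most the mesh $\delta_Q$, so it changes consecutive gaps by at most $2\delta_Q$. The rounded trajectory may then violate \eqref{eq:motion_constraint_discrete}. To fix this, first shrink $\mathbf{x}^\circ$ slightly toward its midpoint so the step bound holds with slack $2\delta_Q$, and use continuity of $F$ to control the loss. Uniform continuity of $F$ on $\mathcal{C}$ then gives $\min_{\mathcal{D}_Q}F\to\min_{\mathcal{C}}F$.
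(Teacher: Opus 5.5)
Your weak inequality is the paper's own subset argument; your compactness and continuity remark usefully adds that the continuous minimum is attained. Your round-after-shrinking argument correctly proves the sufficiency half, $\min_{\mathcal{D}_Q}F\to\min_{\mathcal{C}}F$ whenever the mesh $\delta_Q\to0$, which the paper only asserts.

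\textbf{The gap is the ``only'' half.} Step (ii) does not work as sketched. That the zero set of the analytic map $\nabla F$ is null in $\mathbb{R}^{\tau_p}$ says nothing about the lower-dimensional slices $\{x_l(u)=x_{q(u)}\}$ on which discrete trajectories live when ports repeat. Repeats are unavoidable when $\tau_p>Q$, e.g.\ the paper's $\tau_p=10$, $Q=8$. You would need $\nabla F\circ\iota_q\not\equiv0$ for every index pattern $q$, and step (iii) is left open. More fundamentally, a strict gap for generic port sets at each fixed $Q$ cannot give ``only as the port set becomes dense.'' It says nothing about which sequences of port sets converge, so your reformulation is not the Corollary's claim.

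\textbf{That claim is in fact false.} Computing the continuous optimum exactly shows this, and also makes (ii)--(iii) unnecessary. Put $\mathbf{k}=\mathbf{r}_{k,l}(x)/\beta_{k,l}$, $\mathbf{K}=\mathbf{R}_{k,l}/\beta_{k,l}$ and $\gamma=\sigma^2/(\beta_{k,l}\eta_{p,k}\tau_p)$. The joint covariance of $(h_{k,l}(x),\mathbf{h}_{k,l})$ is positive semidefinite, so its Schur complement gives $\mathbf{K}\succeq\mathbf{k}\mathbf{k}^T$. Hence
\begin{equation*}
F=1-\mathbf{k}^T(\mathbf{K}+\gamma\mathbf{I})^{-1}\mathbf{k}\ \ge\ 1-\frac{\|\mathbf{k}\|^2}{\|\mathbf{k}\|^2+\gamma}\ \ge\ \frac{\gamma}{\tau_p+\gamma},
\end{equation*}
since $\|\mathbf{k}\|^2=\sum_u J_0^2(2\pi|x-x_l(u)|)\le\tau_p$. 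Because $|J_0(s)|<1$ for $s\neq0$, equality holds iff $x_l(u)=x$ for all $u$. This stationary trajectory satisfies any speed limit, so the continuous optimum is simply to park at the target. Two consequences follow:
\begin{itemize}
\item The discrete gap is strict iff $x\notin\mathcal{X}_Q$. This is an exact form of your generic statement, and it needs only $|\kappa(x,x')|<1$ for $x\neq x'$.
\item The gap tends to zero iff $\mathrm{dist}(x,\mathcal{X}_Q)\to0$. For ``if,'' park at the nearest port. For ``only if,'' note that $\|\mathbf{k}\|^2\le\tau_p\sup_{|t|\ge\epsilon}J_0^2(2\pi t)<\tau_p$ when every port is at distance at least $\epsilon$ from $x$.
\end{itemize}
So any port set containing the target, even with $Q=1$, attains the continuous limit exactly. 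Density is sufficient but not necessary. The necessity part of the Corollary, and the strictness claim in Proposition 1, cannot be proved. What is provable is your sufficiency result together with this distance-to-target characterization.
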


	
	\section{Trajectory Design and Complexity Discussion}
	
	\subsection{Trajectory Design Perspective}
	
	Under the continuous FA framework, the uplink training problem can be interpreted as a spatial sampling design problem for a Gaussian random field. The FA trajectory
$
	\{x_l(u)\}_{u=1}^{\tau_p}
$
	determines the sampling locations and directly influences the posterior variance in \eqref{eq:nmse_expression}. Optimal trajectory design thus aims to select sampling locations that maximize the information content of the observations subject to the position constraint in \eqref{eq:motion_constraint}. While jointly optimal trajectory design is in general a non-convex problem, simple structured trajectories are sufficient to capture most of the available spatial diversity. In particular, linear sweep trajectories and periodic oscillatory motions uniformly cover the admissible spatial domain and achieve performance close to the best observed among simple admissible trajectories in practice. These trajectories exploit the spatial correlation structure without requiring instantaneous position optimization or feedback.\vspace{-4mm}
	\subsection{Computational Complexity and Practical Considerations}
	The dominant computational cost of LMMSE channel estimation arises from the inversion of the $\tau_p \times \tau_p$ covariance matrix in \eqref{eq:lmmse_estimator}, which scales as $\mathcal{O}(\tau_p^3)$ per UE-AP pair. This complexity is identical to that of discrete port-based FA schemes and independent of the spatial resolution of the FA. Importantly, continuous FA sampling eliminates the need for fast port switching and associated control signaling. As a result, the proposed framework achieves superior estimation performance without increasing computational complexity or requiring additional RF chains, making it well suited for practical CF-mMIMO deployments.
It is noteworthy that the proposed framework naturally accommodates realistic position constraints through \eqref{eq:motion_constraint}. As long as the FA trajectory covers a sufficiently diverse set of spatial locations within the pilot duration, the performance gains of continuous FA sampling persist even under stringent velocity limitations.
	\section{Numerical Results}
Here, we evaluate the numerical results for the proposed CF-mMIMO system with $L=64$ access points and $K=10$ users uniformly distributed over a $400\times 400$~m$^2$ area. Large-scale fading includes distance-dependent path loss with exponent $\alpha=3.2$ and log-normal shadowing with standard deviation $8$~dB. Unless otherwise stated, the channel estimation error is evaluated at a target antenna position $x^\star=\ell/2$, corresponding to the position used for uplink data transmission. The default simulation parameters are $\sigma^2=1$, $\ell=2$, $\tau_p=10$, $\eta_p=10$ (pilot SNR $=10$~dB), and $Q=8$ uniformly spaced ports. The FA position constraint is defined in \eqref{eq:motion_constraint} with $T_s=1$ and $v_{\max}=0.3$, and results are averaged over multiple independent network realizations.

	Fig.~\ref{fig:main}(a) illustrates the CDF of the NMSE achieved by the proposed continuous FA sampling scheme and by discrete port-based architectures with different numbers of antenna ports $Q$. As shown in Fig.~1(a), continuous FA sampling consistently yields a left-shifted CDF relative to all discrete schemes, indicating uniformly lower estimation error across the entire distribution. Increasing the number of discrete ports improves performance and gradually narrows the gap; however, see that even with a relatively large number of ports, discrete schemes remain inferior to continuous sampling. This behavior is a direct consequence of the fundamentally different spatial sampling capabilities of the two architectures. Continuous FA motion enables channel observations to be collected at arbitrary spatial locations, allowing the LMMSE estimator to exploit a richer observation subspace of the underlying spatially correlated channel. In contrast, discrete port-based schemes restrict sampling to a finite set of locations, which inherently limits the dimensionality of the observation space and results in information loss. As a result, discrete architectures can only approach the performance of continuous sampling as the number of ports becomes large, at the cost of increased hardware complexity and switching overhead.
	
	Fig.~\ref{fig:main}(b) shows the NMSE as a function of the pilot length $\tau_p$ for continuous and discrete fluid-antenna architectures. As expected, the estimation error decreases monotonically with $\tau_p$ for both schemes due to increased observation diversity. However, continuous FA sampling consistently achieves lower NMSE across all pilot lengths. This gap widens as $\tau_p$ increases, since additional pilots allow continuous sampling to explore a larger set of spatial locations, whereas discrete schemes remain constrained to a finite port set. Consequently, the marginal benefit of longer training is fundamentally higher for continuous FA sampling.
	
	Fig.~\ref{fig:main}(c) shows the NMSE as a function of the fluid-antenna length $\ell$ for a fixed pilot length $\tau_p$. As $\ell$ increases, the spatial separation between successive sampling locations grows, leading to weaker spatial correlation among the observed channel samples. Since the LMMSE estimator relies on spatial correlation to infer the channel at a given position, this reduced correlation results in a gradual increase in estimation error for both continuous and discrete schemes. Nevertheless, continuous FA sampling consistently achieves lower NMSE across all antenna lengths, as it exploits the available spatial domain more efficiently than discrete port-based sampling. The widening performance gap reflects the increasing spatial undersampling penalty suffered by discrete architectures as $\ell$ grows.
	
	Fig.~\ref{fig:main}(d) illustrates the NMSE as a function of the number of discrete antenna ports $Q$. As $Q$ increases, the performance of the discrete architecture improves and gradually approaches that of continuous FA sampling. This behavior reflects the fact that a denser port set provides a finer approximation of continuous spatial sampling. Nevertheless, for any finite $Q$, continuous FA sampling achieves lower NMSE, confirming that discrete port-based architectures converge to the continuous limit only as $Q$ becomes large (see Proposition \ref{pro-1}).
	\vspace{0mm}

%
	\section{Conclusion}
	We developed a continuous FA framework for uplink channel estimation in CF-mMIMO systems. By interpreting FA motion as spatial sampling of a spatially correlated random field, we showed that continuous sampling achieves equal or lower estimation error than discrete port-based architectures for any finite pilot budget under identical position constraints, with strict gains under mild and practically relevant conditions. The proposed framework provides a principled foundation for trajectory-aware antenna reconfigurability and motivates future work on joint motion and training design in distributed MIMO systems.
	\vspace{0mm}
\bibliographystyle{IEEEtran}

\end{document}